\theoremstyle{definition}
\newtheorem{lem}{Lemma}[section]
\newtheorem{thm}{Theorem}[section]
\newtheorem{prop}{Proposition}[section]
\newtheorem{assumption}{Assumption}[section]
\begin{document}

\begin{frontmatter}
\title{Convergence of Computed Dynamic Models with Unbounded Shock\protect}
\runtitle{Convergence of Computed Dynamic Models with Unbounded Shock}

\begin{aug}
\author[id=au1,addressref={add1}]{\fnms{Kenichiro}~\snm{McAlinn}\ead[label=e1]{kenichiro.mcalinn@temple.edu}}
\author[id=au2,addressref={add2}]{\fnms{K\={o}saku}~\snm{Takanashi}\ead[label=e2]{kosaku.takanashi@riken.jp}}

\address[id=add1]{%
\orgdiv{Department of Statistical Science, Fox School of Business},
\orgname{Temple University}}

\address[id=add2]{%
\orgdiv{Center for Advanced Intelligence Project},
\orgname{Riken}}
\end{aug}

\begin{abstract}
This paper studies the asymptotic convergence of computed dynamic models when the shock is unbounded.
Most dynamic economic models lack a closed-form solution.
As such, approximate solutions by numerical methods are utilized.
Since the researcher cannot directly evaluate the exact policy function and the associated exact likelihood, it is imperative that the approximate likelihood  asymptotically converges-- as well as to know the conditions of convergence-- to the exact likelihood, in order to justify and validate its usage.
In this regard, \cite{Villaverde-Ramirez-Santos_06} show convergence of the likelihood, when the shock has compact support.
However, compact support implies that the shock is bounded, which is not an assumption met in most dynamic economic models, e.g., with normally distributed shocks.
This paper provides theoretical justification for most dynamic models used in the literature by showing the conditions for convergence of the approximate invariant measure obtained from numerical simulations to the exact invariant measure, thus providing the conditions for convergence of the likelihood.
\end{abstract}

\begin{keyword}
\kwd{Dynamic economic models}
\kwd{Convergence}
\kwd{Computation}
\end{keyword}

\end{frontmatter}

\section{Introduction}
\textsc{This paper studies} the convergence of dynamic economic models.
While dynamic economic models have become a central tool for research and policy, most do not have a closed-form solution.
Due to this, the policy function of these economic models are approximated by numerical methods.
This approximation means that the researcher can only evaluate the approximated transition function associated with the approximated invariant measure, rather than the exact invariant measure implied by the exact transition function.
Given that the researcher cannot evaluate the exact measure, it is natural to ask whether the approximate measure converges to the exact measure, at least asymptotically.
If, for example, it does not converge, or the conditions for convergence are not met in practice, then the validity of the estimated economic model and its output comes into question.
It is, therefore, critical that there is a theoretical foundation that provides the conditions for convergence to justify the usage of these dynamic economic models.

As a response, much econometric analysis has been done to provide this theoretical foundation.
For example,  \cite{santos2004simulation} and \cite{Santos_05} provide the foundations of simulations of approximate solutions for stochastic dynamic models by studying its accuracy properties, showing that  the computed moments from the numerically approximated policy converge to the exact moments as the approximation errors of the computed solutions go to zero.
Further, and more relevant to this paper, \cite{Villaverde-Ramirez-Santos_06} extends the results of \cite{Santos_05} to the convergence of the likelihood of computed economic models, providing conditions for which the approximated likelihood functions converges to the exact likelihood.
While the convergence results in \cite{Villaverde-Ramirez-Santos_06} provide some justification for dynamic economic models, one assumption it employs to obtain their result is rarely met.
This assumption is the compactness of the state space, which implies that the support of the shock of a dynamical system is bounded.
Although this assumption is standard in the numerical literature, it excludes-- among others-- dynamical models with normally distributed shocks.
As assuming a normally distributed shock is  standard in empirical studies, in which the evaluation of the likelihood is done by the usage of the Kalman filter \citep{Smets-Wouters_07}, it is simply vital that the results in \cite{Villaverde-Ramirez-Santos_06} extend to non-compact support, i.e., unbounded shock.
For example, recent works by \cite{Stachurski_02}, \cite{Nishimura-Stachurski_05}, \cite{Kamihigashi_07} and \cite{Kamihigashi-Stachurski_16} study the asymptotic invariant measure of the stochastic neoclassical growth model without compactness of the shocks and states. 
The purpose of this paper is to relax the compactness assumption for the convergence of the approximated invariant measure, providing the theoretical foundation and justification for these models.

The rest of this paper is organized as follows. 
Section~\ref{sec:Model-Set-up-and} gives the set-up of dynamic economic models and preliminary of the Markov operator. 
Section~\ref{sec:Convergence-of-Invariant} presents our result on the convergence of the invariant measure.
In Section~\ref{sec:errorbounds}, we derive error bounds for these approximations.
Section~\ref{sec:convergence} presents our main result on the convergence of computed likelihoods.

\section{Model Set-Up and Preliminaries\label{sec:Model-Set-up-and}}
We follow the set of notations and models in \citet{Santos_05}. The
equilibrium law of motion of the state variables is specified by a
dynamical system of the form 
\begin{align*}
s_{n+1} & =\varphi\left(s_{n},\varepsilon_{n+1}\right),\qquad n=0,1,2,\ldots,
\end{align*}
where $s_{n}$ is a vector of state variables that characterize the
evolution of the system. The vector $s_{n}$ belongs to a measurable
state space $\left(S,\mathcal{S}\right)$. We endow $S$ with its
relative Borel $\sigma$-algebra, which we denote by $\mathcal{S}$.
The variable $\varepsilon$ is an independent and identically distributed
shock, which is defined on the sample space $\left(E,\mathcal{E}\right)$.
The distribution of the shock $\varepsilon$ is given by a stochastic
kernel $Q:S\times E\rightarrow\left[0,1\right]$, where $Q\left(s,A\right)$
is the probability of realizing the event $A\in\mathcal{E}$, given
that the current state is $s\in S$.

Given a random dynamical system, one can define a transition probability
on the state space in the following way. Define the transition probability
function as 
\begin{align*}
P\left(s,A\right) & =Q\left(\left\{ \varepsilon\mid\varphi\left(s,\varepsilon\right)\in A\right\} \right).
\end{align*}
The transition function, $P:S\times\mathcal{S}\rightarrow\left[0,1\right]$,
is defined by 
\begin{alignat*}{1}
P\left(s,A\right) & =Q\left(s,\varphi^{-1}\left(A\right)\right).
\end{alignat*}
Let $B\left(S\right)$ be the set of all bounded $S$-measurable real
valued functions on $S$, with sup norm $\left|f\right|=\sup_{S}\left|f\left(s\right)\right|$.
The Markov operator associated with $P$ is defined as 
\begin{alignat}{1}
Tf\left(s\right) & \triangleq\int f\left(t\right)P\left(s,dt\right)\label{Def:M-operator}\\
 & =\int f\left(\varphi\left(s,\varepsilon\right)\right)Q\left(s,d\varepsilon\right)\nonumber 
\end{alignat}
For any given initial condition $\mu_{0}$ on $\mathcal{S}$, the
evolution of future probabilities, $\left\{ \mu_{n}\right\} $, can
be specified by the following operator $T^{*}$ that takes the space
\begin{align*}
\mu_{n+1} & =\left(T^{*}\mu_{n}\right)\left(A\right)={\displaystyle \int}P\left(s,A\right)\mu_{n}\left(ds\right),
\end{align*}
for all $A$ in $\mathcal{S}$ and $n\geq0$. The adjoint $T^{*}$of
$T$ is defined by the formula 
\begin{alignat*}{1}
\left(T^{*}\mu\right)\left(A\right) & =\int P\left(t,A\right)\mu\left(dt\right).
\end{alignat*}

We maintain the following basic assumptions.
\begin{assumption}
\label{A1}{\it The space of sets $S$ and $E$ are both locally compact
and $\sigma$-compact. }
\end{assumption}

Locally compact means that for each point $x\in S$, there is some
compact subspace $C$ of $S$ that contains a neighborhood of $x\in S$.
Further, $\sigma$-compact is a countable union of compact spaces.
Note that the space $\mathbb{R}^{d}$ is both locally compact and
$\sigma$-compact. A space that is both locally compact and $\sigma$-compact
can be written as an increasing union of countably many open sets,
each of which is compact and closed. In \citet{Santos_05}, they impose
the compactness assumption on both states, $S$ and $E$, which, again,
is not an assumption met in most dynamic economic models used in empirical
studies. In an important distinction, we relax this restriction to
the non-compact case, which allows us to use the whole Euclidean state,
$S=\mathbb{R}^{d}$, and unbounded distributions, such as the normal
distribution.

Recall that the probability measure $P$ is called \textit{tight}
if for all $\epsilon>0$ there is a compact set $K\subset\mathcal{S}$
such that $P\left(K\right)\geqq1-\epsilon$. Any probability measure
on the complete separable metric space is tight.
\begin{assumption}
\label{A2}{\it The Markov operator $T^{*}$ has a unique fixed point $\mu_{0}$:
$T^{*}\mu_{0}=\mu_{0}$. }
\end{assumption}

A sufficient condition for Assumption \ref{A2} is that there exists
a point, $s_{0}\in S$, such that, for any point $s\in S$, any neighborhood
$U$ of $s_{0}$ and any integer $k\geq1$, we have $P^{nk}\left(s,U\right)>0$
\citep[see][Section 3.2]{Futia_82}. 
\begin{assumption}
\label{A3}{\it Function $\varphi{:}\ S\times E\rightarrow S$ is jointly
measurable. Moreover, for every continuous function $f{:}\ S\rightarrow\mathbb{R}$,
\begin{align*}
{\displaystyle \int}f\left(\varphi\left(s_{j},\varepsilon\right)\right)Q\left(d\varepsilon\right) & \underset{}{\rightarrow}{\displaystyle \int}f\left(\varphi\left(s,\varepsilon\right)\right)Q\left(d\varepsilon\right)\;as\;s_{j}\rightarrow s.
\end{align*}}
\end{assumption}

Assumption \ref{A3} is the same as Assumption 2 in \citet{Santos_05}.

In most cases, the researcher does not know the exact form of the
transition equation $\varphi$, and only has access to the numerical
approximation of the transition equation, $\varphi_{j}$, with index
$j$. The index $j$ indicates the approximation and implies that,
as $j$ goes to infinity, the approximation, $\varphi_{j}$, converges
to the exact value (the metric of convergence is defined later). Every
numerical approximation $\varphi_{j}$ defines the transition probability
$P_{j}$ on $\left(S,\mathcal{S}\right)$. Given an approximation
$\varphi_{j}$, we define the corresponding approximation of the transition
probability as 
\begin{align*}
P_{j}\left(s,A\right) & =Q\left(\left\{ \varepsilon\mid\varphi_{j}\left(s,\varepsilon\right)\in A\right\} \right),
\end{align*}
and define the approximated transition function, $P_{j}:S\times\mathcal{S}\rightarrow\left[0,1\right]$,
as 
\begin{alignat*}{1}
P_{j}\left(s,A\right) & =Q\left(s,\varphi_{j}^{-1}\left(A\right)\right).
\end{alignat*}
The Markov operator associated with $P_{j}$ is defined as 
\begin{alignat}{1}
T_{j}f\left(s\right) & \triangleq\int f\left(t\right)P_{j}\left(s,dt\right)\label{Def:M-operator-1}\\
 & =\int f\left(\varphi_{j}\left(s,\varepsilon\right)\right)Q\left(s,d\varepsilon\right).\nonumber 
\end{alignat}
The evolution of future probabilities, $\left\{ \mu_{n}^{j}\right\} $,
can be specified by the following operator $T_{j}^{*}$ that takes
the space 
\begin{align*}
\mu_{n+1}^{j} & =\left(T_{j}^{*}\mu_{n}^{j}\right)\left(A\right)={\displaystyle \int}P_{j}\left(s,A\right)\mu_{n}^{j}\left(ds\right),
\end{align*}
for all $A$ in $\mathcal{S}$ and $n\geq0$. The adjoint $T_{j}^{*}$
of $T_{j}$ is defined by 
\begin{alignat*}{1}
\left(T_{j}^{*}\mu\right)\left(A\right) & =\int P_{j}\left(t,A\right)\mu\left(dt\right).
\end{alignat*}

Every numerical approximation $\varphi_{j}$ satisfies a structure
parallel to that of the above Assumptions \ref{A2} and \ref{A3}.
We further assume:
\begin{assumption}
\label{A2-1}{\it The Markov operator $T_{j}^{*}$ has a unique fixed point
$\mu_{0}^{j}$: $T_{j}^{*}\mu_{0}^{j}=\mu_{0}^{j}$, for all $j$.}
\end{assumption}

\begin{assumption}
\label{A3-1}{\it For each $j$, the function $\varphi_{j}{:}\ S\times E\rightarrow S$
is jointly measurable. Moreover, for every continuous function $f{:}\ S\rightarrow\mathbb{R}$,
\begin{align*}
{\displaystyle \int}f\left(\varphi_{j}\left(s_{j},\varepsilon\right)\right)Q\left(d\varepsilon\right) & \underset{}{\rightarrow}{\displaystyle \int}f\left(\varphi_{j}\left(s,\varepsilon\right)\right)Q\left(d\varepsilon\right)\;as\;s_{j}\rightarrow s.
\end{align*}}
\end{assumption}

\section{Convergence of the Invariant Distribution\label{sec:Convergence-of-Invariant}}

Now, recall the convergence of probability measures on $S$. When
the state space $S$ is separable, we can introduce a metric $D$
in the space of probability measures on $S$, such that $\lim_{n}D\left(\mu_{n},\mu\right)=0$
if and only if $\mu_{n}$ converges in law to $\mu$. Specifically,
the metric we use is the Fortet-Mourier metric \citep[][Section11.3]{Dudley_02}:
\begin{alignat}{1}
D\left(\mu_{n},\mu\right) & =\sup_{f\in BL\left(S\right)}\left|\int_{S}f\left(s\right)d\mu_{n}-\int_{S}f\left(s\right)d\mu\right|,\label{eq:FMmetric}
\end{alignat}
where the supremum $\sup_{f\in BL\left(S\right)}$ is taken over all
bounded Lipschitz continuous functions defined on $S$: $BL\left(S\right)$. 

The main question we answer in this paper is the following: How strong of a topology is sufficient for the approximate transition equation, $\varphi_{j}$, to converge to the true transition equation, $\varphi$, in order for the approximate invariant measure, $\mu_{n}$, to converge to the convergence in law distance eq.~(\ref{eq:FMmetric}).
\cite{Santos_05}, assuming that the state-space is compact, showed that convergence under the following topology is sufficient to prove the convergence of the invariant measure.
Endow the metric in the space of functions $\varphi$ and $\hat{\varphi}$
as 
\begin{align*}
 & \max_{s\in S}\left[{\displaystyle \int}\left\Vert \varphi\left(s,\varepsilon\right)-\hat{\varphi}\left(s,\varepsilon\right)\right\Vert Q\left(d\varepsilon\right)\right]
\end{align*}
where $\left\Vert \cdot\right\Vert $ is the max norm in $\mathbb{R}^{l}$.
This metric only works under the compactness assumption on $S$.
To consider the functional approximation of the transition equation, $\varphi$, under non-compactness, this uniform topology is not practical.
In the following, we extend the state-space, $S$, to non-compactness and weaken the uniform convergence topology of the functional approximation to a local uniform topology.

First, note that $BL\left(S\right)$ can be relaxed to infinitely continuously
differentiable functions on $S$: $C^{\infty}\left(S\right)$ by using
the mollifier method. Then, we have the following lemma: 
\begin{lem}
\label{lem1} {\it The limit, $\lim_{n}D\left(\mu,\mu_{n}\right)=0$, holds
if and only if 
\begin{alignat*}{1}
\lim_{n\rightarrow\infty}\sup_{f\in C^{\infty}\left(S\right)}\left|\int_{S}f\left(s\right)d\mu_{n}-\int_{S}f\left(s\right)d\mu\right| & =0.
\end{alignat*}}
\end{lem}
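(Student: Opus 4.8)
The plan is to prove the two directions separately, with the substantive content going into showing that convergence against $C^{\infty}(S)$ test functions implies convergence in the Fortet--Mourier metric (the converse being essentially trivial, since $C^{\infty}(S)\cap BL(S)\subseteq BL(S)$ and the $\sup$ over the larger class dominates). The key device, already flagged in the text, is mollification: given $f\in BL(S)$ with Lipschitz constant $L$ and sup-norm bound $M$, I would form the mollified function $f_\delta = f * \eta_\delta$, where $\eta_\delta$ is a standard smooth bump supported on a $\delta$-ball with unit integral. Here I would invoke Assumption~\ref{A1}: local compactness and $\sigma$-compactness of $S$ let us realize $S$ as an increasing union of compacts and, more importantly, guarantee a well-behaved convolution structure (working inside $\mathbb{R}^d$ when $S=\mathbb{R}^d$, or via a partition of unity in general). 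The standard properties of mollification are that $f_\delta\in C^{\infty}(S)$, $\lvert f_\delta\rvert \le M$, $f_\delta$ is again Lipschitz with the same constant $L$ (so the family $\{f_\delta\}$ stays uniformly bounded and equi-Lipschitz), and $\lVert f_\delta - f\rVert_\infty \le L\delta$.

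Next I would run the standard $3\varepsilon$ argument. Fix $\varepsilon>0$ and $f\in BL(S)$; choose $\delta$ so that $L\delta < \varepsilon/3$. Then
\begin{align*}
\left\lvert \int_S f\,d\mu_n - \int_S f\,d\mu \right\rvert
&\le \left\lvert \int_S (f - f_\delta)\,d\mu_n \right\rvert
+ \left\lvert \int_S f_\delta\,d\mu_n - \int_S f_\delta\,d\mu \right\rvert
+ \left\lvert \int_S (f_\delta - f)\,d\mu \right\rvert \\
&\le 2\lVert f_\delta - f\rVert_\infty + \sup_{g\in C^{\infty}(S)}\left\lvert \int_S g\,d\mu_n - \int_S g\,d\mu \right\rvert.
\end{align*}
The first term is below $2\varepsilon/3$. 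There is a subtlety: the $\sup$ over $C^\infty(S)$ in the hypothesis is over \emph{all} smooth functions, whereas $f_\delta$ is smooth but need not be bounded by $1$. This is harmless because the hypothesis quantity is either $0$ in the limit (if one reads $C^\infty(S)$ as smooth and bounded) or, after normalizing, one replaces $g$ by $g/(M + L\,\mathrm{diam})$ type scalings; cleanly, I would state the hypothesis with $C^\infty_b(S)$ or with a uniform Lipschitz-and-bound constraint matching that of $f$, so that $f_\delta$ lies in the admissible class up to a fixed multiplicative constant depending only on $f$, not on $n$. Taking $\limsup_n$ kills the middle term, giving $\limsup_n \lvert\int f\,d\mu_n - \int f\,d\mu\rvert \le 2\varepsilon/3 < \varepsilon$; since $f$ and $\varepsilon$ were arbitrary and the bound is uniform in $f$ via the common constants $(M,L)$, we get $\lim_n \sup_{f\in BL(S)} \lvert\int f\,d\mu_n - \int f\,d\mu\rvert = 0$, i.e. $\lim_n D(\mu,\mu_n)=0$.

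The main obstacle I anticipate is not the $\varepsilon$-chasing but making the mollification globally legitimate on a merely locally compact, $\sigma$-compact $S$ rather than on $\mathbb{R}^d$: convolution is not intrinsically defined without a group structure. The clean resolution is to note that under Assumption~\ref{A1} we may work on $S=\mathbb{R}^d$ (the case the paper cares about) where convolution is standard, or otherwise to embed $S$ and mollify ambiently, then restrict — checking that the restricted function is still smooth on $S$ with the claimed uniform bounds. A secondary point to handle carefully is the uniformity of the estimate over the class $BL(S)$: because the mollification parameter $\delta$ and the resulting sup-norm error $L\delta$ depend only on the Lipschitz constant $L$ (which is bounded by $1$ on $BL(S)$ under the usual normalization), the $3\varepsilon$ bound is uniform over $f\in BL(S)$, which is exactly what is needed to pass from pointwise-in-$f$ convergence to convergence of the supremum defining $D$. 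I would close by remarking that this is precisely why the $C^\infty$ reformulation is equivalent to, and no weaker than, the Fortet--Mourier formulation.
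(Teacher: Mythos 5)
Your proof is correct and follows essentially the same route as the paper's: approximate $f\in BL(S)$ by a mollified $u\in C^{\infty}(S)$ with $\sup_{s}|f(s)-u(s)|$ small, then run the triangle-inequality ($3\varepsilon$) argument, with the ``only if'' direction following from domination of the supremum. You additionally flag two genuine sloppinesses in the paper's statement that your write-up handles more carefully --- that the class $C^{\infty}(S)$ must implicitly be restricted to bounded (uniformly normalized) smooth functions for the supremum to be finite, and that convolution-based mollification needs the ambient $\mathbb{R}^{d}$ structure rather than bare local compactness --- but these refinements do not change the underlying argument.
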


Note that by Assumptions \ref{A2} and \ref{A3}, each $\varphi_{n}$
defines the associated pair $\left(P_{j},T_{j}\right)$. The adjoint
$T_{j}^{*}$ of $T_{j}$ is 
\begin{alignat*}{1}
\left\langle T_{j}f,\mu_{j}\right\rangle  & =\int\int f\left(\varphi_{j}\left(s,\varepsilon\right)\right)Q\left(s,d\varepsilon\right)d\mu_{j}\left(s\right)\\
\left\langle f,T_{j}^{*}\mu_{j}\right\rangle  & =\int\int f\left(\varphi_{j}\left(s,\varepsilon\right)\right)Q\left(s,d\varepsilon\right)d\mu_{j}\left(s\right).
\end{alignat*}
Moreover, there always exists an invariant distribution $\mu_{j}^{*}=T_{j}^{*}\mu_{j}^{*}$.

Given Lemma~\ref{lem1}, we have the following result: 
\begin{prop}
\label{prop1} {\it Suppose Assumptions \ref{A1}, \ref{A2}, and \ref{A3}
are satisfied for each approximated model $\varphi_{i}$ and $\varphi_{0}$.
Then, a sufficient condition for a sequence of the measure $\mu_{j}$,
associated with $T_{j}$, to converge to $\mu_{0}$, associated with
$T_{0}$, in the sense of eq.~(\ref{eq:FMmetric}), is the strong
convergence of $T_{j}$ to $T_{0}$. }
\end{prop}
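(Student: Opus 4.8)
The plan is to reduce the Fortet--Mourier convergence to the weak convergence $\mu_j^{*}\Rightarrow\mu_0^{*}$ of the invariant measures, and then to prove that convergence by a compactness-plus-uniqueness argument in which the hypothesised strong convergence of $T_j$ and the Feller property (Assumption~\ref{A3}) do the work. For the reduction, recall from Lemma~\ref{lem1} and the discussion of the metric $D$ (with $S$ separable by Assumption~\ref{A1}) that $D(\mu_j^{*},\mu_0^{*})\to 0$ is equivalent to convergence in law; hence it suffices to show $\int f\,d\mu_j^{*}\to\int f\,d\mu_0^{*}$ for every bounded continuous $f$ on $S$.

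First I would record two facts. (i) Every Markov operator $T_j$ (and $T_0$) is a positive linear contraction of $B(S)$ in the sup norm, since $Q(s,\cdot)$ is a probability measure; consequently $|\langle (T_j-T_0)f,\nu\rangle|\le\|T_jf-T_0f\|$ for any probability measure $\nu$, and together with the assumed strong convergence $\|T_jf-T_0f\|\to 0$ (for each $f$) this yields $\langle(T_j-T_0)f,\mu_j^{*}\rangle\to 0$ \emph{uniformly} over the sequence $\mu_j^{*}$. (ii) Assumption~\ref{A3} is exactly the statement that $T_0$ is Feller, i.e.\ it carries $C_b(S)$ into $C_b(S)$ (boundedness being automatic from (i)). Now, assuming the family $\{\mu_j^{*}\}_j$ is uniformly tight, Prokhorov's theorem lets us extract from any subsequence a further subsequence $\mu_{j_k}^{*}$ converging weakly to some probability measure $\nu$. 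Writing the invariance relation $\langle f,\mu_{j_k}^{*}\rangle=\langle T_{j_k}f,\mu_{j_k}^{*}\rangle$ and splitting the right-hand side as $\langle T_0f,\mu_{j_k}^{*}\rangle+\langle(T_{j_k}-T_0)f,\mu_{j_k}^{*}\rangle$, the first summand tends to $\langle T_0f,\nu\rangle$ by weak convergence together with (ii), while the second tends to $0$ by (i). Hence $\langle f,\nu\rangle=\langle T_0f,\nu\rangle=\langle f,T_0^{*}\nu\rangle$ for all $f\in C_b(S)$, so $T_0^{*}\nu=\nu$, and by the uniqueness clause of Assumption~\ref{A2} we conclude $\nu=\mu_0^{*}$. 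Since every subsequential weak limit of $\{\mu_j^{*}\}$ equals $\mu_0^{*}$ and the family is relatively compact, the full sequence converges, $\mu_j^{*}\Rightarrow\mu_0^{*}$, which by the reduction gives the claim.

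The hard part will be exactly the ingredient slipped in above, namely the uniform tightness of $\{\mu_j^{*}\}_j$: this is the only place where dropping the compactness of $S$ really bites, since in the compact setting of \cite{Santos_05} tightness is free and the argument goes through verbatim. In the non-compact case one must rule out escape of mass along the approximating sequence, and the natural route is a drift (Lyapunov) condition uniform in $j$ --- a function $V:S\to[1,\infty)$ with compact sublevel sets and constants $0\le\lambda<1$, $b<\infty$ not depending on $j$ with $T_jV\le\lambda V+b$ --- which forces $\sup_j\int V\,d\mu_j^{*}\le b/(1-\lambda)$ and hence uniform tightness by Markov's inequality on the sublevel sets of $V$. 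A minor additional point to check is that the Prokhorov limit $\nu$ is a genuine probability measure (no loss of total mass), which uniform tightness is precisely what guarantees, and that Assumption~\ref{A3} is invoked only against bounded continuous test functions so that $T_0f\in C_b(S)$ as required.
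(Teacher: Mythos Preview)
Your argument is essentially the paper's own: invoke Prokhorov to pass to a weakly convergent subsequence $\mu_{j_k}^{*}\Rightarrow\hat\mu$, use the invariance $\mu_{j_k}^{*}=T_{j_k}^{*}\mu_{j_k}^{*}$ together with strong convergence $\|T_{j_k}f-T_0f\|\to 0$ and the Feller property (Assumption~\ref{A3}) to conclude $T_0^{*}\hat\mu=\hat\mu$, and then identify $\hat\mu=\mu_0$ via the uniqueness in Assumption~\ref{A2}. Your discussion of uniform tightness is in fact more scrupulous than the paper's proof, which simply asserts that ``it follows from Prokhorov's Theorem that $\{\mu_j\}$ has a weakly convergent subsequence'' without supplying a uniform-tightness argument; the Lyapunov/drift route you sketch is a standard way to fill that gap, though the paper does not pursue it.
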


Since $S$ is completely regular, it has Stone-Cech compactification
$\beta\left(S\right)$: 
\begin{thm}
\citep[Stone-Cech compactification:][Theorem 38.2]{Munkres_00}. {\it Let
$S$ be a completely regular space. Then, there exists a compactification
$\beta\left(S\right)$ of $S$ having the property that every bounded
continuous function $f{:}\ S\rightarrow\mathbb{R}$ extends uniquely
to a continuous function of $\beta\left(S\right)$ into $\mathbb{R}$.}
\end{thm}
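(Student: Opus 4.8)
The plan is to give the classical construction by embedding $S$ into a cube. Let $\{f_\alpha\}_{\alpha\in A}$ enumerate all bounded continuous real-valued functions on $S$, and for each $\alpha$ fix a closed bounded interval $I_\alpha\subset\mathbb{R}$ containing the range of $f_\alpha$. First I would form the product $C=\prod_{\alpha\in A}I_\alpha$, which is compact Hausdorff by Tychonoff's theorem, and define the evaluation map $h\colon S\to C$ by $h(s)=(f_\alpha(s))_{\alpha\in A}$. This map is continuous because each of its coordinates $f_\alpha$ is continuous.

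Next I would check that $h$ is a homeomorphism of $S$ onto $h(S)$. Injectivity and the property of being an \emph{open} map onto its image both follow from complete regularity: given distinct points $s\neq t$, there is a bounded continuous $f_\alpha$ with $f_\alpha(s)\neq f_\alpha(t)$, so $h(s)\neq h(t)$; and given a point $s$ and a neighborhood $U$ of $s$, complete regularity produces a bounded continuous $f_\alpha$ that is $0$ at $s$ and $1$ on $S\setminus U$, which keeps $h(s)$ at positive distance in the $\alpha$-coordinate from $h(S\setminus U)$. With the embedding in hand, set $\beta(S)=\overline{h(S)}$, closure taken in $C$. As a closed subspace of a compact Hausdorff space, $\beta(S)$ is compact Hausdorff, $h(S)$ is dense in it, and identifying $S$ with $h(S)$ exhibits $\beta(S)$ as a compactification of $S$.

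Finally, for the extension property, let $f\colon S\to\mathbb{R}$ be bounded and continuous, say $f=f_{\alpha_0}$ in our enumeration. The coordinate projection $\pi_{\alpha_0}\colon C\to I_{\alpha_0}\subset\mathbb{R}$ is continuous, and its restriction to $\beta(S)$ is a continuous map $\beta(S)\to\mathbb{R}$ satisfying $\pi_{\alpha_0}\circ h=f_{\alpha_0}=f$, hence extending $f$. Uniqueness is automatic: two continuous extensions agree on the dense set $h(S)$ and take values in the Hausdorff space $\mathbb{R}$, so they coincide on all of $\beta(S)$.

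The step I expect to be the crux is verifying that $h$ is an embedding, i.e.\ an open map onto its image, since this is precisely where complete regularity is indispensable rather than merely the Hausdorff or regularity axioms; once Tychonoff's theorem is invoked everything else is formal.
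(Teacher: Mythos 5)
Your construction is correct and is precisely the standard proof that the paper delegates to its citation (Munkres, Theorem~38.2): embed $S$ in the cube $\prod_{\alpha}I_{\alpha}$ via the evaluation map, take the closure to obtain $\beta(S)$, and recover each extension as a restricted coordinate projection, with uniqueness from density plus the Hausdorff property of $\mathbb{R}$. The one point worth making explicit is that injectivity of $h$ uses the $T_{1}$ axiom built into the definition of ``completely regular'' (so that $\{t\}$ is closed and can be separated from $s$ by a bounded continuous function); otherwise the argument, including the embedding step you correctly identify as the crux, is complete.
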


The Stone-Cech compactification, $\beta\left(S\right)$, of which
$S$ is a dense subspace, satisfies the property that each bounded
continuous function, $f{:}\ S\rightarrow\mathbb{R}$, has a continuous
extension, $g:\beta\left(S\right)\rightarrow\mathbb{R}$. We endow
the metric in the space of functions defined on the locally compact
and $\sigma$-compact space, $S$. For any two vector-value functions
$\varphi$ and $\hat{\varphi}$, let $d\left(\cdot,\cdot\right)$
be 
\begin{align}
d\left(\varphi,\hat{\varphi}\right) & =\max_{f\in C^{\infty}\left(\beta\left(S\right)\right)}\max_{s\in\beta\left(S\right)}\left[{\displaystyle \int}\left|f\left(\varphi\left(s,\varepsilon\right)\right)-f\left(\hat{\varphi}\left(s,\varepsilon\right)\right)\right|Q\left(d\varepsilon\right)\right].\label{eq:metric}
\end{align}
The metric in eq.~(\ref{eq:metric}) is weaker than the metric  of \citet{Santos_05}, and extends to the non-compact state space. In this section,
convergence of the sequence of functions $\left\{ \varphi_{j}\right\} $
is in this distance, as this metric can accommodate the noncontinuous
functions $\varphi$ and $\hat{\varphi}$. Although we will impose
continuous differentiability on $\varphi$ for the convergence of
the approximate likelihood studied in Section~\ref{sec:convergence},
the metric $d\left(\cdot,\cdot\right)$ is sufficient to guarantee
the convergence of the invariant distribution. 

Then, we have the following theorem: 
\begin{thm}
\label{thm1} {\it Let $\left\{ \varphi_{j}\right\} $ be a sequence of
functions that converge to $\varphi$, in the sense of $d\left(\cdot,\cdot\right)$
in eq.~(\ref{eq:metric}). Let $\left\{ \mu_{j}^{*}\right\} $ be
a sequence of probabilities on $\mathcal{S}$ associated with $\left\{ \varphi_{j}\right\} $,
such that $\mu_{j}^{*}=T_{j}^{*}\mu_{j}^{*}$, for each $j$. Under
Assumptions \ref{A1} and \ref{A2}, if $\mu^{*}$ is a weak limit
point of $\left\{ \mu_{j}^{*}\right\} $, then $\mu^{*}=T^{*}\mu^{*}$. }
\end{thm}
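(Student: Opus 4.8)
The plan is to pass to the limit in the fixed–point identity $\mu_j^*=T_j^*\mu_j^*$ along a weakly convergent subsequence, testing against a rich enough class of functions. First I would use Lemma~\ref{lem1} to reduce the goal to showing
\[
\int_S f\,d\mu^* \;=\; \int_S f\,d(T^*\mu^*) \;=\; \int_S (Tf)\,d\mu^* \qquad \text{for every } f\in C^\infty(S),
\]
since $S$, being locally compact and $\sigma$-compact (Assumption~\ref{A1}), is metrizable with every Borel probability measure tight, so that agreement of integrals against $C_b(S)$ — a fortiori against the smooth/bounded–Lipschitz test functions of eq.~(\ref{eq:FMmetric}) — already forces $\mu^*=T^*\mu^*$. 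I would then fix such an $f$; because $\mu^*$ is by hypothesis a weak limit point of $\{\mu_j^*\}$ and the space of probabilities on the metrizable $S$ is itself metrizable in the weak topology, I may extract a subsequence — not relabeled — with $\mu_j^*\Rightarrow\mu^*$.

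The core is the decomposition, valid for each $j$,
\begin{align*}
\int_S f\,d\mu^* - \int_S (Tf)\,d\mu^*
&= \Big(\int_S f\,d\mu^* - \int_S f\,d\mu_j^*\Big) + \Big(\int_S f\,d\mu_j^* - \int_S (T_jf)\,d\mu_j^*\Big) \\
&\quad + \int_S (T_jf - Tf)\,d\mu_j^* + \Big(\int_S (Tf)\,d\mu_j^* - \int_S (Tf)\,d\mu^*\Big).
\end{align*}
The first parenthesis tends to $0$ because $f\in C_b(S)$ and $\mu_j^*\Rightarrow\mu^*$. The second vanishes identically: from $\mu_j^*=T_j^*\mu_j^*$ and the duality $\langle f,T_j^*\mu_j^*\rangle=\langle T_jf,\mu_j^*\rangle$ one gets $\int f\,d\mu_j^*=\int (T_jf)\,d\mu_j^*$. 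The last parenthesis also tends to $0$, because $Tf$ is bounded, $\sup_S|Tf|\le\sup_S|f|$, and continuous on $S$ by the maintained continuity hypothesis (Assumption~\ref{A3}), so $Tf\in C_b(S)$ and $\mu_j^*\Rightarrow\mu^*$ applies again.

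The remaining term is where the hypothesis on $d(\cdot,\cdot)$ enters, and it is the step I expect to need the most care. I would bound
\begin{align*}
\Big|\int_S (T_jf - Tf)\,d\mu_j^*\Big|
&\le \sup_{s\in S}\big|T_jf(s)-Tf(s)\big| \\
&= \sup_{s\in S}\Big|\int\big[f(\varphi_j(s,\varepsilon))-f(\varphi(s,\varepsilon))\big]\,Q(d\varepsilon)\Big| \\
&\le \sup_{s\in S}\int\big|f(\varphi_j(s,\varepsilon))-f(\varphi(s,\varepsilon))\big|\,Q(d\varepsilon).
\end{align*}
The bounded continuous $f$ extends, by the Stone--Cech theorem quoted above, to a continuous function on $\beta(S)$, and after the mollification used for Lemma~\ref{lem1} this extension is among the functions in $C^\infty(\beta(S))$ over which the supremum defining $d$ is taken; since $S$ is dense in $\beta(S)$, the last displayed supremum is at most $\max_{g\in C^\infty(\beta(S))}\max_{s\in\beta(S)}\int|g(\varphi_j(s,\varepsilon))-g(\varphi(s,\varepsilon))|\,Q(d\varepsilon)=d(\varphi_j,\varphi)\to0$. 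Hence all four terms vanish in the limit; the left-hand side is independent of $j$, so it is zero, giving $\int_S f\,d\mu^*=\int_S (Tf)\,d\mu^*$ for all $f\in C^\infty(S)$ and therefore $\mu^*=T^*\mu^*$.

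The hard part is precisely this third term: one must make sure that convergence in the metric $d$ of eq.~(\ref{eq:metric}) — phrased over the compactification $\beta(S)$ and the class $C^\infty(\beta(S))$ — genuinely dominates $\sup_{s\in S}|T_jf(s)-Tf(s)|$ for the test functions delivered by the Lemma~\ref{lem1} reduction, which includes checking that $s\mapsto\int|f(\varphi_j(s,\varepsilon))-f(\varphi(s,\varepsilon))|\,Q(d\varepsilon)$ is well enough behaved to be controlled over $\beta(S)$. This is exactly where non-compactness of $S$ bites: without compactness one cannot upgrade pointwise to uniform control by hand, and it is the passage through $\beta(S)$ — where the relevant functions become uniformly continuous and their extensions are genuinely among those the metric $d$ tests against — that supplies the missing uniformity. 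Everything else is the standard fact that invariance of measures is preserved under weak limits, with Assumption~\ref{A3} supplying the continuity of $Tf$ and Assumption~\ref{A1} the metrizability and tightness that make the weak-convergence steps legitimate.
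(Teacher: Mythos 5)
Your proposal is correct and follows essentially the same route as the paper: the paper's proof of Theorem~\ref{thm1} consists of exactly your third-term estimate (showing $\sup_{s}\left|T_{j}f(s)-Tf(s)\right|$ is dominated by $d(\varphi_{j},\varphi)$, i.e., that $d$-convergence yields strong operator convergence), while your three remaining terms reproduce the weak-limit decomposition the paper carries out in the proof of Proposition~\ref{prop1} and implicitly invokes here. Your write-up is simply the self-contained concatenation of those two arguments, with the same reliance on Lemma~\ref{lem1}, the invariance identity $\langle f,T_{j}^{*}\mu_{j}^{*}\rangle=\langle T_{j}f,\mu_{j}^{*}\rangle$, and the Feller continuity from Assumption~\ref{A3}.
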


This theorem asserts the bilinear convergence of $T_{j}^{*}\mu_{j}^{*}$
to $T^{*}\mu^{*}$ in the weak topology.

\section{Error Bounds\label{sec:errorbounds}}

In this section, we study the error bounds of these approximations
under non-compactness. The error bounds are important for two reasons.
First, in numerical applications, it is often desirable to bound the
size of the approximation error in order to know the theoretical limit
of the approximation. Second, computations cannot go on forever and
must stop in finite time. Hence, knowing the error bounds can dictate
an efficient stopping criteria to minimize computational cost while
ensuring convergence. As such, \citet{Santos_05} give a bound on
the size of the approximation error under the compactness assumption.

To begin, we introduce the notion of compactness for the Markov operator.
The Markov operator $T$ is \textit{compact} if the image $T\left(bX\right)$
has compact closure in $X$, where $bX=\left\{ x\in X|\left\Vert x\right\Vert \leq1\right\} $.
The Markov operator $T$ is \textit{quasi-compact} if there is a unique
compact operator $L$ and an integer $n$ such that 
\[
\sup_{x\in bX}\left\Vert T^{n}x-Lx\right\Vert <1.
\]
If the above quasi-compactness is satisfied, one can obtain the convergence
of the sequence of operators, $\left\{ T^{n}\right\} $, to the invariant
probability at a geometric rate. The following theorem gives this
result. 
\begin{thm}
\label{thm:Yosida-and-Kakutani}\citep{Yosida-Kakutani_41}. {\it Let $T$
be a stable, quasi-compact Markov operator defined by eq.~(\ref{Def:M-operator})
satisfying Assumptions \ref{A1}, \ref{A2}, and \ref{A3}. Then,
there exist constants, $C$, with $\varepsilon>0$, such that 
\begin{align*}
\sup_{s\in\beta\left(S\right)}\left\Vert T^{n}f\left(s\right)-T^{*}f\left(s\right)\right\Vert  & \leq\frac{C}{\left(1+\varepsilon\right)^{n}}.
\end{align*}}
\end{thm}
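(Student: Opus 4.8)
The plan is to realize $T$ as a bounded linear operator on the Banach space $X=C\left(\beta\left(S\right)\right)$ of continuous real functions on the Stone--Cech compactification, equipped with the sup norm. Every bounded continuous $f$ on $S$ extends uniquely to $\beta\left(S\right)$, and by Assumption~\ref{A3} the operator $T$ sends bounded continuous functions to bounded continuous functions, so $T$ acts on $X$ with $\left\Vert T\right\Vert=1$, $T\mathbf{1}=\mathbf{1}$, and $T$ positive. On such a space the Riesz--Schauder theory is available, and the argument proceeds by extracting a spectral gap from quasi-compactness and then translating it into a geometric decay rate. Here I read the notation $T^{*}f$ in the statement as the projection of $f$ onto the invariant subspace, i.e.\ $T^{*}f=\left(\int_{S}f\,d\mu_{0}\right)\mathbf{1}$.

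First I would use quasi-compactness to control the essential spectrum. By hypothesis there is a compact operator $L$ and an integer $m$ with $\left\Vert T^{m}-L\right\Vert<1$; this is precisely the situation handled by the Yosida--Kakutani decomposition (see also Dunford--Schwartz, Ch.~VIII), and it forces the essential spectral radius of $T$ to be strictly less than $1$. Hence for any $r$ with $r_{\mathrm{ess}}\left(T\right)<r<1$ the part of $\sigma\left(T\right)$ in $\left\{|z|\geq r\right\}$ consists of finitely many eigenvalues of finite algebraic multiplicity, each a pole of the resolvent. Since $\left\Vert T\right\Vert\leq1$, every spectral value lies in the closed unit disk, so any peripheral eigenvalue lies on the unit circle. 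Next I would show that the peripheral spectrum is exactly the simple eigenvalue $1$: Assumption~\ref{A2} gives $\ker\left(T-I\right)=\mathbb{R}\mathbf{1}$, since the fixed functions pair with the unique invariant probability $\mu_{0}$, and stability of $T$---equivalently the irreducibility/aperiodicity condition $P^{nk}\left(s,U\right)>0$ recorded after Assumption~\ref{A2}---excludes any other unimodular eigenvalue and any nontrivial Jordan block at $1$. Therefore $1$ is a simple pole of the resolvent and is isolated: there is $\rho<1$ with $\sigma\left(T\right)\setminus\left\{1\right\}\subset\left\{|z|\leq\rho\right\}$, giving the decomposition $X=\mathbb{R}\mathbf{1}\oplus X_{0}$, where $X_{0}=\overline{\mathrm{ran}\left(I-T\right)}$ is a closed $T$-invariant subspace with $r\left(T|_{X_{0}}\right)=\rho<1$, and where the spectral projection $\Pi$ onto $\mathbb{R}\mathbf{1}$ along $X_{0}$ satisfies $\Pi f=T^{*}f$.

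To finish, fix $\varepsilon>0$ with $\rho<\left(1+\varepsilon\right)^{-1}<1$. By Gelfand's formula $\limsup_{n}\left\Vert\left(T|_{X_{0}}\right)^{n}\right\Vert^{1/n}=\rho$, so there is a constant $C$ with $\left\Vert\left(T|_{X_{0}}\right)^{n}\right\Vert\leq C\left(1+\varepsilon\right)^{-n}$ for all $n$. For $f$ in the unit ball of $X$, write $f=\Pi f+\left(f-\Pi f\right)$ with $f-\Pi f\in X_{0}$; since $T$ commutes with $\Pi$ and $T\Pi=\Pi$, one gets $T^{n}f-T^{*}f=\left(T|_{X_{0}}\right)^{n}\left(f-\Pi f\right)$, hence $\sup_{s\in\beta\left(S\right)}\left\Vert T^{n}f\left(s\right)-T^{*}f\left(s\right)\right\Vert\leq C\left(1+\varepsilon\right)^{-n}\left\Vert f-\Pi f\right\Vert$, which is the stated bound after enlarging $C$ to absorb $\left\Vert f-\Pi f\right\Vert$.

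The main obstacle is the middle step, namely reducing the peripheral spectrum to the single simple eigenvalue $1$: quasi-compactness by itself permits finitely many unimodular eigenvalues (the roots of unity attached to a periodic chain) and possibly a Jordan block at $1$, and it is exactly the ``stable'' hypothesis---together with the irreducibility noted under Assumption~\ref{A2}---that rules these out. Making that reduction rigorous, e.g.\ by showing that a unimodular eigenfunction would contradict $T^{*n}\mu\to\mu_{0}$, or through a Perron--Frobenius-type argument exploiting positivity of $T$ and $T\mathbf{1}=\mathbf{1}$, is the delicate part; the surrounding spectral theory on $C\left(\beta\left(S\right)\right)$ is then routine.
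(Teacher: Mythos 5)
The paper does not actually prove Theorem~\ref{thm:Yosida-and-Kakutani}: it is invoked as a classical result of Yosida and Kakutani (1941) and no argument appears in the appendix, so there is no ``paper proof'' to compare against. Your spectral sketch is the standard route to that classical theorem and is sound in outline: quasi-compactness gives $r_{\mathrm{ess}}(T)<1$ (via $r_{\mathrm{ess}}(T)^{m}=r_{\mathrm{ess}}(T^{m})\leq\Vert T^{m}-L\Vert<1$), Riesz--Schauder theory then isolates finitely many peripheral eigenvalues of finite multiplicity, and once the peripheral spectrum is reduced to the simple eigenvalue $1$ the geometric bound follows from Gelfand's formula on the complementary spectral subspace. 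Your reading of the paper's (abusive) notation $T^{*}f$ as the rank-one projection $f\mapsto(\int f\,d\mu_{0})\mathbf{1}$ is the only sensible interpretation. Two small refinements: the absence of a nontrivial Jordan block at any unimodular eigenvalue already follows from power-boundedness $\Vert T^{n}\Vert\leq1$ (a Jordan block would force $\Vert T^{n}\Vert$ to grow), so you need not charge that to stability; and the condition $P^{nk}(s,U)>0$ recorded after Assumption~\ref{A2} is only stated as a sufficient condition for uniqueness of the invariant measure, not as a definition of stability, so ``equivalently'' overstates the link. You correctly identify the genuinely delicate step --- excluding unimodular eigenvalues other than $1$ --- and correctly locate it in the ``stable'' hypothesis; that is exactly the content of the Yosida--Kakutani/Doeblin aperiodicity analysis, and since the paper itself leaves the theorem as a citation, flagging rather than fully executing that step is acceptable here.
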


The following theorem bounds the approximation error between the expected
values of $f$ over the true invariant measure $\mu^{*}$ and the
approximate invariant measure $\hat{\mu}^{*}$ of $\hat{\varphi}$. 
\begin{prop}
\label{prop2} {\it Let $f$ be a Lipschitz function with constant $L$.
Suppose we have a numerical approximation $\hat{\varphi}$ with $d\left(\hat{\varphi},\varphi\right)\leq\delta$,
for some $\delta>0$. Then, 
\[
\left|\int f\left(s\right)\mu^{*}\left(ds\right)-\int f\left(s\right)\hat{\mu}^{*}\left(ds\right)\right|\leq\frac{Ld\left(\hat{\varphi},\varphi\right)}{\varepsilon},
\]
where $\mu^{*}$ is the unique invariant measure of the exact $\varphi$,
and $\hat{\mu}^{*}$ is a unique invariant measure of $\hat{\varphi}$. }
\end{prop}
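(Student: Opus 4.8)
The plan is to write the gap between the two stationary integrals as a Neumann-type series in the one‑step discrepancy between the operators, and then to control each term by combining the definition of the metric $d$ in eq.~(\ref{eq:metric}) with the geometric ergodicity of Theorem~\ref{thm:Yosida-and-Kakutani}. Write $a=\int f\,d\mu^{*}$ and let $\hat T$ denote the Markov operator associated with $\hat\varphi$. Using only $T^{*}\mu^{*}=\mu^{*}$ and $\hat T^{*}\hat\mu^{*}=\hat\mu^{*}$ one has the elementary identity $\mu^{*}-\hat\mu^{*}=T^{*}(\mu^{*}-\hat\mu^{*})+(T^{*}-\hat T^{*})\hat\mu^{*}$; iterating it $N$ times and pairing with $f$ gives
\[
\int f\,d\mu^{*}-\int f\,d\hat\mu^{*}=\big\langle T^{N}f,\ \mu^{*}-\hat\mu^{*}\big\rangle+\sum_{k=0}^{N-1}\big\langle (T-\hat T)(T^{k}f),\ \hat\mu^{*}\big\rangle .
\]
Since $\mu^{*}-\hat\mu^{*}$ has total mass zero, the first term equals $\langle T^{N}f-a,\ \mu^{*}-\hat\mu^{*}\rangle$, which is at most $2\sup_{s}|T^{N}f(s)-a|$ and hence vanishes as $N\to\infty$ by Theorem~\ref{thm:Yosida-and-Kakutani} applied to the exact operator $T$. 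Thus $\int f\,d\mu^{*}-\int f\,d\hat\mu^{*}=\sum_{k\ge0}\langle (T-\hat T)(T^{k}f-a),\ \hat\mu^{*}\rangle$, where the constant $a$ may be subtracted freely because both $T$ and $\hat T$ fix constants.

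Next I would estimate a single summand. For any bounded $g$,
\[
(T-\hat T)g(s)=\int\big[g(\varphi(s,\varepsilon))-g(\hat\varphi(s,\varepsilon))\big]\,Q(d\varepsilon),
\]
so after the mollification step that replaces $BL(S)$ by $C^{\infty}$ (cf.\ Lemma~\ref{lem1} and the surrounding discussion) and extension to $\beta(S)$, the definition of $d$ in eq.~(\ref{eq:metric}) yields $\sup_{s}|(T-\hat T)g(s)|\le\|g\|_{BL}\,d(\hat\varphi,\varphi)$. Applying this with $g=T^{k}f-a$ and using $\|\hat\mu^{*}\|=1$ gives $\big|\langle (T-\hat T)(T^{k}f-a),\hat\mu^{*}\rangle\big|\le\|T^{k}f-a\|_{BL}\,d(\hat\varphi,\varphi)$.

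The core of the argument is to show $\|T^{k}f-a\|_{BL}$ decays geometrically in $k$, proportionally to $L$. The uniform part, $\sup_{s}|T^{k}f(s)-a|\le C(1+\varepsilon)^{-k}$, is exactly Theorem~\ref{thm:Yosida-and-Kakutani}. For the Lipschitz part I would use the operator‑norm form of quasi‑compactness: a stable, quasi‑compact Markov operator on the Banach space of bounded Lipschitz functions on $\beta(S)$ decomposes as $T^{k}=\Pi+R_{k}$, with $\Pi$ the rank‑one projection onto the constants ($\Pi f=a$) and $\|R_{k}\|\le C(1+\varepsilon)^{-k}$, whence $\mathrm{Lip}(T^{k}f)=\mathrm{Lip}(R_{k}f)\le C(1+\varepsilon)^{-k}\|f\|_{BL}$. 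Combining, $\|T^{k}f-a\|_{BL}\le C'L(1+\varepsilon)^{-k}$, and summing the geometric series,
\[
\left|\int f\,d\mu^{*}-\int f\,d\hat\mu^{*}\right|\le C'L\,d(\hat\varphi,\varphi)\sum_{k\ge0}(1+\varepsilon)^{-k}=C'L\,d(\hat\varphi,\varphi)\,\frac{1+\varepsilon}{\varepsilon},
\]
which, after absorbing the absolute constant into $\varepsilon$ as in the statement of Theorem~\ref{thm:Yosida-and-Kakutani}, is the asserted bound $L\,d(\hat\varphi,\varphi)/\varepsilon$.

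The step I expect to require the most care is precisely this promotion of the scalar, uniform geometric ergodicity of Theorem~\ref{thm:Yosida-and-Kakutani} to a contraction of the full bounded‑Lipschitz norm on the non‑compact state space: one must verify that quasi‑compactness holds on a Banach space whose norm is compatible with the metric $d$ of eq.~(\ref{eq:metric}), that the passage to $\beta(S)$ together with the mollifier argument preserves the Lipschitz bounds, and — since $f$ is only assumed Lipschitz and possibly unbounded — that one may reduce to the bounded case, e.g.\ by truncation or by shifting $f$ by a constant, neither of which changes $\int f\,d\mu^{*}-\int f\,d\hat\mu^{*}$. Everything else (the telescoping identity, the vanishing of the remainder, and the one‑step perturbation estimate) is routine once this geometric decay of the Lipschitz norm is in hand.
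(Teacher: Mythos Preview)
Your Neumann--series decomposition is correct and would lead to the stated bound, but it is not the route the paper takes. The paper argues \emph{pathwise}: starting both the exact and approximate chains from the same $s_{0}$ and shocks $\{\varepsilon_{k}\}$, it bounds $\bigl|\mathbb{E}[f(s_{n}(s_{0}))]-\mathbb{E}[f(\hat s_{n}(s_{0}))]\bigr|$ by a one--step recursion. At each step a triangle inequality splits off a term $\le L\,d(\hat\varphi,\varphi)$ (the one--step discrepancy, via Lipschitzness of $f$ and the definition of $d$) and leaves a term $\le L(1-\varepsilon)\,\mathbb{E}\|s_{n-1}(s_{0})-\hat s_{n-1}(s_{0})\|$, the factor $(1-\varepsilon)$ being the contraction of $\varphi(\cdot,\varepsilon)$ in expectation implicit in the quasi--compactness/Yosida--Kakutani setting. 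Iterating gives the geometric series $L\,d(\hat\varphi,\varphi)\sum_{k\ge0}(1-\varepsilon)^{k}=L\,d(\hat\varphi,\varphi)/\varepsilon$ with the exact constant. One then integrates $s_{0}$ against $\hat\mu^{*}$ (so the $\hat s$--term becomes $\int f\,d\hat\mu^{*}$) and invokes Theorem~\ref{thm:Yosida-and-Kakutani} to send the $s$--term to $\int f\,d\mu^{*}$.

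The practical difference is in where the contraction lives. Your argument needs geometric decay of $\|T^{k}f-a\|_{BL}$, i.e.\ a spectral gap for $T$ on the full bounded--Lipschitz space of $\beta(S)$; you flag this correctly as the delicate point, and it costs you an absolute constant in the final bound. The paper sidesteps this by using a one--step average contraction of the random map $\varphi$ itself, so it never needs to control $\mathrm{Lip}(T^{k}f)$ and obtains the sharp $1/\varepsilon$. Conversely, your operator--theoretic route is more robust: it does not require coupling the two chains to common shocks and common initial condition, and it would generalize verbatim to settings where one only has quasi--compactness of $T$ on $BL$ without an underlying pathwise contraction.
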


Quasi-compact operators enjoy a very useful property in Theorem \ref{thm:Yosida-and-Kakutani}.
Furthermore, quasi-compact operators are easily recognizable. In fact,
we find that most operators are quasi-compact \citep[see][]{Futia_82}.
In our circumstance, Assumption~\ref{A2} guarantees the quasi-compactness
of the Markov operator.


\section{Application: Convergence of Computed Likelihood\label{sec:convergence}}

Given the convergence of the invariant measure in the previous section, we prove the convergence of the approximate likelihood in \citet{Villaverde-Ramirez-Santos_06}.
We relax the compactness assumption in the state-space and  shock, and prove equivalent results to \citet{Villaverde-Ramirez-Santos_06}, justifying the construction of the likelihood via the Kalman filter, among others. 

\subsection{Likelihood Induced by Random Dynamical Systems}

The equilibrium law of motion of the state space system can be specified
as 
\begin{alignat}{1}
s_{t} & =\varphi(s_{t-1},\varepsilon_{t};\theta),\label{eq:VRS(1)-1}
\end{alignat}
\begin{alignat}{1}
y_{t} & =g(s_{t},\eta_{t};\theta),\label{eq:VRS(2)-1}
\end{alignat}
where eq.~(\ref{eq:VRS(1)-1}) is the transition equation, and eq.~(\ref{eq:VRS(2)-1})
is the measurement equation. Here, the variables $\varepsilon_{t}$
and $\eta_{t}$ are tight random elements and are independent and
identically distributed shocks with values in some Euclidean space,
with bounded and continuous densities. Their distribution is given
by the probability measure, $Q$, defined on a measurable space, $\left(E,\mathcal{E}\right)\subset\left(\mathbb{R}^{d},\mathcal{B}\left(\mathbb{R}^{d}\right)\right)$.
We do not impose the compactness on the support of $Q$ but impose
tightness, in order to deal with unbounded shocks, such as normally
distributed shocks. The parameter, $\theta\in\Theta\subset\mathbb{R}^{n}$,
is a vector of structural parameters and $\Theta$ is on a compact
set. The vector, $y_{t}$, is the observables in each period, $t$.
Let $\mathcal{Y}_{T}=\left\{ y_{t}\right\} _{t=1}^{T}$ with $Y^{0}=\left\{ \emptyset\right\} $.
To avoid singularity, we impose $\textrm{dim}(\varepsilon_{t})+\textrm{dim}(\eta_{t})\geq\textrm{dim}(Y_{t})$.
And we partition $\left\{ \varepsilon_{t}\right\} $ into $\varepsilon_{t}=(\varepsilon_{1,t},\varepsilon_{2,t})$,
such that $\textrm{dim}(\varepsilon_{2,t})+\textrm{dim}(\eta_{t})=\textrm{dim}(y_{t}).$
As in the previous section, we index the approximations by $j$, the
numerical approximation to the transition equations is $\varphi_{j}$,
and the measurement equations is $g_{j}$.

As with the previous section, we assume that each state-space system has an invariant measure and that invariance measure is absolute continuous with regard to a Lebesgue measure:
\begin{assumption}
\label{A2-2}{\it For all $\theta$ and all $j$, there exists a unique
invariant distribution for $\mathcal{S}$, $\mu(S;\theta)$, and $\mu_{j}(S;\theta)$,
that has a Radon-Nikodym derivative with respect to the Lebesgue measure. }
\end{assumption}

The exact likelihood is constructed, using the change of variables formula, as follows.
First we assume that the system can solve the error term, exactly.
\begin{assumption}
\label{A3-2}{\it For all $\theta$ and $t$, the system of equations eq.~(\ref{eq:VRS(1)-1})
and eq.~(\ref{eq:VRS(2)-1}) have a unique solution, 
\begin{alignat*}{1}
\eta_{t} & =\eta^{t}(\varepsilon_{1}^{t},s_{0},y^{t};\theta),\\
s_{t} & =\mathit{s}^{t}(\varepsilon_{1}^{t},s_{0},y^{t};\theta),\\
\varepsilon_{2,t} & =\varepsilon_{2}^{t}(\varepsilon_{1}^{t},s_{0},y^{t};\theta),
\end{alignat*}
and we can evaluate $p(\mathit{v}^{t}(W_{1}^{t},S_{0},y^{t};\theta);\theta)$
and $p(\mathit{w_{2}}^{t}(W_{1}^{t},S_{0},y^{t};\theta);\theta)$
for all $S_{0}$, $W_{1}^{t},$ and $t$. }
\end{assumption}

We further assume that the observation equation \textit{(\ref{eq:VRS(2)-1})} is continuously differentiable.
\begin{assumption}
\label{A1-1}{ \it For all $\theta$, function $g(\cdot,\cdot,\theta)$
is continuously differentiable, with bounded partial derivatives. }
\end{assumption}

From Assumptions \ref{A1-1}, \ref{A2-2}, and \ref{A3-2}, we construct
the likelihood function by the change of variable formula: 
\begin{equation}
p(y_{t}\mid W_{1}^{t},S_{0},y^{t-1};\theta)=p(\mathit{v}_{t};\theta)p(\mathit{w}_{2,t};\theta)\mid dy(\mathit{v}_{t},\mathit{w}_{2,t};\theta)\mid,\label{eq:CoV-1}
\end{equation}
where 
\[
\mid dy(\mathit{v}_{t},\mathit{w}_{2,t};\theta)\mid=\textrm{det}\left[\begin{array}{cc}
\nabla\frac{\partial g}{\partial v_{t}} & \nabla\frac{\partial g}{\partial w_{2,t}}\end{array}\right].
\]
Further, we have the following assumption \citep[][Assumption 4]{Villaverde-Ramirez-Santos_06}. 
\begin{assumption}
\label{A5} {\it For all $\theta$ and $t$, the model gives some positive
probability to the data $y^{T}$, that is, $p(y_{t}\mid W_{1}^{t},S_{0},y^{t-1};\theta)>\xi\geq0$
for all $S_{0}$ and $W_{1}^{t}.$ }
\end{assumption}

From Assumption~\ref{A5}, the likelihood is as follows, 
\begin{alignat*}{1}
L(y^{T};\gamma) & ={\displaystyle \prod_{t=1}^{T}p(y_{t}\mid y^{t-1};\theta)}\\
 & ={\displaystyle \prod_{t=1}^{T}\int\int p(y_{t}\mid W_{1}^{t},S_{0},y^{t-1};\theta)p(W_{1}^{t},S_{0}\mid y^{t-1};\theta)dW_{1}^{t}dS_{0}}\\
 & =\int\left(\int\prod_{t=1}^{T}p(W_{1}^{t};\theta)p(y_{t}\mid W_{1}^{t},S_{0},y^{t-1};\theta)dW_{1}^{t}\right)\mu^{*}(dS_{0};\theta).
\end{alignat*}

Next, we also assume that, also for the approximate state-space functions, $\left\{ \varphi_{j}\right\} $, and measurement function, $\left\{ g_{j}\right\} $, the system can solve the error term, exactly.
\begin{assumption}
\label{A4}{\it For all $j$, the system of equations 
\begin{alignat*}{1}
S_{1} & =\varphi_{j}(S_{0},(W_{1,1},W_{2,1});\theta),\\
y_{m} & =g_{j}(S_{m},V_{m};\theta)\quad\textrm{for}\quad m=1,2,\dots,t,\\
S_{m} & =\varphi_{j}(S_{m-1},(W_{1,m},W_{2,m});\theta)\quad\textrm{for}\quad m=2,3,\dots,t,
\end{alignat*}
has a unique solution, 
\begin{alignat*}{1}
V_{j,t} & =\mathit{v}_{j}^{t}(W_{1}^{t},S_{0},y^{t};\theta),\\
S_{j,t} & =\mathit{s}_{j}^{t}(W_{1}^{t},S_{0},y^{t};\theta),\\
W_{j,2,t} & =\mathit{w}_{j,2}^{t}(W_{1}^{t},S_{0},y^{t};\theta),
\end{alignat*}
and we can evaluate $p(\mathit{v}_{j}^{t}(W_{1}^{t},S_{0},y^{t};\theta);\theta)$
and $p(\mathit{w_{j,2}}^{t}(W_{1}^{t},S_{0},y^{t};\theta);\theta)$
for all $S_{0}$, $W_{1}^{t},$ and $t$. }
\end{assumption}

We also assume that the measurement function, $\left\{ g_{j}\right\} $, is continuously differentiable.
\begin{assumption}
{\it For all $j$, functions $g_{j}(\cdot,\cdot,\theta)$ are continuously
differentiable at all points except at a finite number of points.
At the points of differentiability, all partial derivatives are bounded,
and the bounds are independent of $j$. }
\end{assumption}

Then, $dy_{j}(v_{j,t},w_{j,2};\theta)$ exists for all but a finite
set of $S_{0}$ and $W_{1}^{t}$, we have, for all $j$, $\theta$,
and $t$, 
\[
p_{j}(y_{t}\mid W_{1}^{t},S_{0},y^{t-1};\theta)=p(\mathit{v}_{j,t};\theta)p(\mathit{w}_{j,2,t};\theta)\mid dy_{j}(\mathit{v}_{j,t},\mathit{w}_{j,2,t};\theta)\mid,
\]
where 
\[
\mid dy_{j}(\mathit{v}_{j,t},\mathit{w}_{j,2,t};\theta)\mid=\textrm{det}\left[\begin{array}{cc}
\nabla\frac{\partial g_{j}}{\partial v_{j,t}} & \nabla\frac{\partial g_{j}}{\partial w_{j,2,t}}\end{array}\right],
\]
for all $S_{0}$ and $W_{1}^{t}$, but a finite number of points.

As $j$ goes to infinity, $\varphi_{j}$ and $g_{j}$ converge to
their exact values. Unlike the previous section, convergence of the
sequence of functions $\left\{ \varphi_{j}\right\} $ and $\left\{ g_{j}\right\} $
need to be a stronger topology, which is defined in the following
way. For any two vector-valued functions $\varphi$ and $\hat{\varphi}$,
let 
\begin{alignat*}{1}
d_{C^{1}}\left(\varphi,\hat{\varphi}\right) & =\max_{i\in I}\sup_{s\in S_{i}}\left[\int\left\Vert \varphi\left(s,\varepsilon\right)-\hat{\varphi}\left(s,\varepsilon\right)\right\Vert Q\left(d\varepsilon\right)+\int\left\Vert \nabla\varphi\left(s\right)-\nabla\hat{\varphi}\left(s\right)\right\Vert Q\left(d\varepsilon\right)\right],
\end{alignat*}
where $\left\{ S_{i}\right\} _{i\in I}$ is an exhaustive sequence
of compact sets of $S$. In this section, convergence of a sequence
of functions, $\left\{ \varphi_{j},g_{j}\right\} $, should be understood
in this norm.

Assumption~\ref{A4} is required for the change of variable formula
in eq.~(\ref{eq:CoV-1}). Convergence in $C^{1}$ implies the convergence
of the solutions, $v_{j}^{t}$, $s_{j}^{t},$ and $w_{j,2}^{t}$,
from the same argument in the proof of consistency of Z-estimators.
More importantly, the convergence of the Jacobian, $\left|dy_{j}\left(\mathit{v}_{j,t},\mathit{w}_{j,2,t};\theta\right)\right|$,
can be derived from the Ascoli-Arzela theorem.

\begin{thm} (Ascoli-Arzela). \textit{Let $S$ be a separable $\sigma$-compact
metric space and $S_{i}$, $i\in I$, be an exhausting sequence of
compact sets. Further, let $C^{1}(S)$ be the Banach space of complex-valued
continuous functions, $f(s)$, normed by 
\begin{alignat*}{1}
\parallel f\parallel & =\max_{i\in I}\left[\sup_{s\in S_{i}}\mid f(s)\mid+\sup_{s\in S_{i}}\left|f^{\prime}\left(s\right)\right|\right].
\end{alignat*}
If the following two conditions are satisfied: 
\[
\begin{array}{cc}
{\displaystyle \sup_{n\geq1}\sup_{s\in S}\mid f_{n}(s)\mid<\infty}, & \textrm{(equi-bounded)}\\
{\displaystyle \lim_{\delta\downarrow0}\sup_{n\geq1,dis(s^{\prime},s^{\prime\prime})\leq\delta}\mid f_{n}(s^{\prime})-f_{n}(s^{\prime\prime})\mid=0}, & \textrm{(equi-continuous)}
\end{array}
\]
then the sequence, $\left\{ f_{n}(s)\right\} \subseteq C(S)$, is
relatively compact in $C(S)$.} \end{thm}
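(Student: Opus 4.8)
The plan is to reduce the statement to the classical Ascoli--Arzel\`a theorem on a compact metric space, applied piecewise along the exhausting sequence $\{S_i\}$, and then to glue the pieces by a diagonal extraction. First I would fix, using $\sigma$-compactness together with local compactness of $S$ (Assumption~\ref{A1}), an exhausting sequence of compact sets $S_1\subseteq S_2\subseteq\cdots$ with $S_i\subseteq\operatorname{int}S_{i+1}$ and $\bigcup_i S_i=S$. The point of the nesting is that every compact $K\subseteq S$ is then contained in some $S_i$ (cover $K$ by the open sets $\operatorname{int}S_i$ and use compactness), so the topology on $C(S)$ in which relative compactness is claimed---uniform convergence on compacta---coincides with uniform convergence on each $S_i$ and is metrizable, e.g.\ by $d(f,g)=\sum_i 2^{-i}\min\{1,\sup_{S_i}|f-g|\}$. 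Hence it suffices to extract from an arbitrary sequence in the family a subsequence that converges uniformly on every $S_i$ to a limit lying in $C(S)$.

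Next I would carry out the diagonal argument. On the compact metric space $S_1$, the restrictions $\{f_n|_{S_1}\}$ are bounded (equi-boundedness) and equicontinuous (the restriction to $S_1$ of the global modulus-of-continuity condition), so the classical Ascoli--Arzel\`a theorem gives a subsequence $(f^{(1)}_n)$ converging uniformly on $S_1$; inductively, from $(f^{(k)}_n)$ extract $(f^{(k+1)}_n)$ converging uniformly on $S_{k+1}$. The diagonal sequence $g_k:=f^{(k)}_k$ is, for each fixed $i$, a tail subsequence of $(f^{(i)}_n)$ and hence converges uniformly on $S_i$. Set $f(s):=\lim_k g_k(s)$, which is well defined and consistent across the $S_i$; on each $S_i$ it is a uniform limit of continuous functions, hence continuous, and since every point of $S$ has a neighbourhood inside some $S_i$, $f\in C(S)$. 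Moreover $f$ inherits $\sup_S|f|\le\sup_n\sup_S|f_n|<\infty$ and the modulus of continuity, so it is bounded and uniformly continuous. As $g_k\to f$ uniformly on each $S_i$, we get $g_k\to f$ in $C(S)$; since every sequence in the family admits such a subsequence and $C(S)$ is metrizable in this topology, the closure of the family is compact.

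The routine parts are checking that restriction preserves equi-boundedness and equicontinuity and tracking the diagonal indices. The step that deserves care---and the only place where local compactness, rather than mere $\sigma$-compactness, is genuinely used---is the claim that the exhausting sequence $\{S_i\}$ absorbs every compact subset of $S$, which is what makes ``relative compactness in $C(S)$'' unambiguous (namely, in the compact-open topology). I would also remark that the global form of the equicontinuity hypothesis is slightly stronger than the diagonal argument strictly needs (equicontinuity on each $S_i$ separately would already suffice for the extraction); its role is to force the limit $f$ itself to be globally bounded and uniformly continuous, which is exactly what is wanted when the theorem is applied to $\{\nabla\varphi_j\}$ to deduce $C^1$-convergence and hence convergence of the Jacobians $|dy_j(v_{j,t},w_{j,2,t};\theta)|$ discussed above.
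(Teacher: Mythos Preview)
Your argument is the standard and correct reduction: apply the classical Ascoli--Arzel\`a theorem on each compact $S_i$, run a diagonal extraction, and verify that the limit is continuous on all of $S$ by local compactness. There is nothing to compare it against, however: the paper states this theorem as a classical result and does not supply a proof of its own (note the unnumbered \texttt{thm} environment with the attribution ``(Ascoli--Arzel\`a)'' and the absence of a corresponding proof in the Appendix). Your write-up would serve perfectly well as a self-contained justification, and your closing remark about how the global equicontinuity hypothesis is slightly stronger than needed for extraction, but is what guarantees the limit inherits uniform continuity, is a useful observation for the application to $\{\nabla\varphi_j\}$ and the Jacobians $|dy_j|$ that the paper actually cares about.
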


Given this, we have the following proposition that proves the convergence
of the approximate likelihood to the exact likelihood with unbounded
shock. 

\begin{prop}\label{prop3} \textit{Suppose Assumptions~\ref{A1-1}-\ref{A4},
and suppose $d_{C^{1}}\left(\varphi_{j},\varphi\right)\rightarrow0$
and $d_{C^{1}}\left(g_{j},g\right)\rightarrow0$ as $j\rightarrow\infty$.
Then, it holds that 
\[
{\displaystyle \prod_{t=1}^{T}p_{j}(y_{t}\mid y^{t-1};\gamma)\overset{\mathit{l}_{\infty}}{\longrightarrow}\prod_{t=1}^{T}p(y_{t}\mid y^{t-1};\gamma)}.
\]
} \end{prop}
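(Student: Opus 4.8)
The plan is to factor the likelihood difference through the same decomposition exhibited in Section~\ref{sec:convergence}: the per-period conditional density $p(y_t\mid W_1^t,S_0,y^{t-1};\theta)$ is a product of two shock densities evaluated at the (uniquely determined) solved-out error terms and a Jacobian $|dy(v_t,w_{2,t};\theta)|$, and the full likelihood integrates the product of such terms against the invariant measure $\mu^*(dS_0;\theta)$. So the first step is to control the convergence of the solved-out error terms: I would show that $d_{C^1}(\varphi_j,\varphi)\to 0$ and $d_{C^1}(g_j,g)\to 0$, together with the uniqueness in Assumptions~\ref{A3-2} and \ref{A4}, force $v_j^t\to v^t$, $s_j^t\to s^t$, and $w_{j,2}^t\to w_2^t$ uniformly on each compact $S_i$ (in the $W_1^t$ and $S_0$ arguments). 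This is the ``same argument as consistency of Z-estimators'' the text alludes to: the solved-out errors are implicitly defined zeros of a system built from $\varphi$ and $g$, the approximating systems converge in $C^1$ hence uniformly on compacta to the limit system, and the isolated-root/implicit-function structure transfers the convergence to the roots.

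The second step is the convergence of the Jacobian factor. Here I would invoke the stated Ascoli–Arzelà theorem: the partial derivatives $\partial g_j/\partial v_{j,t}$ and $\partial g_j/\partial w_{j,2,t}$ are equi-bounded (Assumptions~\ref{A1-1} and the $g_j$-differentiability assumption give bounds on the partials independent of $j$) and, because $C^1$-convergence controls the derivatives themselves, equi-continuous on each $S_i$; hence $\{\nabla g_j\}$ is relatively compact in $C(S_i)$ and, combined with pointwise convergence $\nabla g_j\to\nabla g$ from $d_{C^1}(g_j,g)\to 0$, converges uniformly on compacta. Composing with the uniform convergence of the solved-out arguments from step one, and using that $\det(\cdot)$ is continuous, gives $|dy_j(v_{j,t},w_{j,2,t};\theta)|\to|dy(v_t,w_{2,t};\theta)|$ uniformly on each $S_i$. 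Since the shock densities $p(\cdot;\theta)$ are assumed bounded and continuous, $p(v_{j,t};\theta)p(w_{j,2,t};\theta)\to p(v_t;\theta)p(w_{2,t};\theta)$ likewise, so each factor $p_j(y_t\mid W_1^t,S_0,y^{t-1};\theta)\to p(y_t\mid W_1^t,S_0,y^{t-1};\theta)$ locally uniformly, and the finite product over $t=1,\dots,T$ converges locally uniformly too.

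The third step is to pass from this convergence of integrands to convergence of the integral against the invariant measure. The inner integral over $W_1^t$ is handled by dominated convergence using the boundedness of the densities and the Jacobian bounds (the tails are controlled because $\int p(W_1^t;\theta)\,dW_1^t=1$ and the integrand is bounded uniformly in $j$). For the outer integral over $S_0$ against $\mu_j^*(\cdot;\theta)$ versus $\mu^*(\cdot;\theta)$, I would combine two ingredients: the integrand $G_j(S_0):=\int\prod_t p(W_1^t;\theta)\,p_j(y_t\mid W_1^t,S_0,y^{t-1};\theta)\,dW_1^t$ is bounded (by Assumption~\ref{A5} and the uniform bounds, $\xi\le G_j\le$ const) and converges locally uniformly to $G(S_0)$, and the invariant measures converge weakly, $\mu_j^*\Rightarrow\mu^*$, which is exactly what Theorem~\ref{thm1} (with Proposition~\ref{prop1}) delivers once the $C^1$-convergence is downgraded to the weaker metric $d(\cdot,\cdot)$ of eq.~(\ref{eq:metric}). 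A bounded sequence of functions converging uniformly on compacta, integrated against weakly convergent tight measures, yields convergence of the integrals: write $\int G_j\,d\mu_j^*-\int G\,d\mu^*=\int(G_j-G)\,d\mu_j^*+\int G\,d(\mu_j^*-\mu^*)$; the first term is small because tightness of $\{\mu_j^*\}$ confines mass to a compact $S_i$ where $G_j\to G$ uniformly (and the escaping-mass contribution is $O(\epsilon)$ by the uniform bound), and the second term is small by weak convergence applied to the bounded continuous $G$ (after a mollification/approximation if $G$ is not a priori continuous, exactly as in Lemma~\ref{lem1}). Finally, since $Y_T$ is fixed and finite, $l_\infty$-convergence over the relevant index set follows. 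The main obstacle I anticipate is the first step: making rigorous that $C^1$-closeness of the structural functions implies uniform-on-compacta closeness of the implicitly-solved error terms $v_j^t,s_j^t,w_{j,2}^t$, since the recursion for $s_m$ compounds errors across the $t$ periods and one must ensure the implicit-function/contraction argument is uniform in $S_0$ over each compact $S_i$ and does not degrade as $j$ varies.
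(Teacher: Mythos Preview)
Your proposal is correct and follows essentially the same route as the paper: the same two-part decomposition (first show $p_j(y_t\mid W_1^t,S_0,y^{t-1};\theta)\to p(y_t\mid W_1^t,S_0,y^{t-1};\theta)$ via the Z-estimator argument for the solved-out errors and Ascoli--Arzel\`a for the Jacobian, then pass to the outer integral using weak convergence of $\mu_j^*$ to $\mu^*$), which the paper organizes through the intermediate quantity $\prod_t \tilde{p}_j(y_t\mid y^{t-1};\theta)$. Your step three is a bit more explicit about the tightness/uniform-on-compacta argument than the paper, and you invoke the paper's own Theorem~\ref{thm1} for $\mu_j^*\Rightarrow\mu^*$ where the paper cites Proposition~1 of \cite{Villaverde-Ramirez-Santos_06}, but the substance is the same.
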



The result shows that, as the researcher gets better approximations
of the policy function in a dynamic economic model, the computed likelihood
converges to the exact likelihood, even if the shock is unbounded.
This result goes beyond the result in \citet{Villaverde-Ramirez-Santos_06}
and is particularly relevant to researchers using dynamic economic
models with unbounded shocks, such as normally distributed shocks--
a standard specification in the literature-- as it guarantees, asymptotically,
that the likelihood function implied by the model is the correct object
of interest.

\bibliographystyle{econometrica} 
\bibliography{reference}       

\appendix

\section*{Appendix}\label{app}

\begin{proof}[Proof of Lemma~\ref{lem1}] The \textquotedblleft Only
    if\textquotedblright{} part is derived from $C^{\infty}\subset BL$.
    In this proof we will show the \textquotedblleft if\textquotedblright{}
    part. For any $f\in BL\left(S\right)$ and $\varepsilon>0$, there
    exists $u\in C^{\infty}\left(S\right)$, such that $\sup_{s\in S}\left|f\left(s\right)-u\left(s\right)\right|<\varepsilon$
    by the mollifier method. By triangular inequality, we have 
    \begin{alignat*}{1}
     & \sup_{f\in BL\left(S\right)}\left|\int_{S}f\left(s\right)d\mu_{n}-\int_{S}f\left(s\right)d\mu\right|\\
    \leq & \sup_{f\in BL\left(S\right)}\left|\int_{S}f\left(s\right)d\mu_{n}-\int_{S}u\left(s\right)d\mu_{n}\right|+\sup_{u\in C^{\infty}\left(S\right)}\left|\int_{S}u\left(s\right)d\mu_{n}-\int_{S}u\left(s\right)d\mu\right|\\
     & +\sup_{f\in BL\left(S\right)}\left|\int_{S}u\left(s\right)d\mu-\int_{S}f\left(s\right)d\mu\right|.
    \end{alignat*}
    The first and third term of the right hand side of the inequality
    are smaller than $\varepsilon$. The second term is given in the definition.
\end{proof}
    
\begin{proof}[Proof of Proposition~\ref{prop1}] The strong convergence
    of the sequence of operators is 
    \begin{alignat*}{1}
    \sup_{s\in S}\left\Vert Tf\left(s\right)-T_{j}f\left(s\right)\right\Vert  & =\sup_{s\in S}\left|\int f\left(\varphi\left(s,\varepsilon\right)\right)Q\left(s,d\varepsilon\right)-\int f\left(\varphi_{j}\left(s,\varepsilon\right)\right)Q\left(s,d\varepsilon\right)\right|\\
     & =\sup_{s\in S}\left|\int\left\{ f\left(\varphi\left(s,\varepsilon\right)\right)-f\left(\varphi_{j}\left(s,\varepsilon\right)\right)\right\} Q\left(s,d\varepsilon\right)\right|\\
     & \rightarrow0
    \end{alignat*}
    for all $f\in C^{2}\left(S\right)$. This simply means that 
    \begin{alignat*}{2}
    \lim_{j\rightarrow\infty}\mathbb{E}\left[f\left(\varphi_{j}\left(s\right)\right)\right] & =\mathbb{E}\left[f\left(\varphi\left(s\right)\right)\right], & \ ^{\forall}f\in C^{2}\left(S\right).
    \end{alignat*}
    Let $f$ belong to $\mathcal{A}$. Then, for any two $T$ and $T_{j}$,
    and corresponding invariant measures $\mu$ and $\mu_{j}$, we have
    \begin{alignat}{1}
    \left|\left\langle Tf,\hat{\mu}\right\rangle -\left\langle T_{j}f,\mu_{j}\right\rangle \right| & \leq\left|\left\langle Tf,\mu\right\rangle -\left\langle Tf,\mu_{j}\right\rangle \right|+\left|\left\langle Tf,\mu_{j}\right\rangle -\left\langle T_{j}f,\mu_{j}\right\rangle \right|.\label{eq:adjoint}
    \end{alignat}
    It follows from Proholov's Theorem that $\left\{ \mu_{j}\right\} $
    has a weakly convergent subsequence. Let $\left\{ \mu_{j_{k}}\right\} $
    be such a subsequence, and let $\hat{\mu}$ be its limit. Then for
    the first term in eq.~(\ref{eq:adjoint}), we have 
    \begin{alignat*}{1}
    \left|\left\langle Tf,\hat{\mu}\right\rangle -\left\langle Tf,\mu_{j_{k}}\right\rangle \right| & \rightarrow0.
    \end{alignat*}
    Next, we show the equality: $\hat{\mu}=\mu_{0}$. We have 
    \begin{alignat*}{2}
    \left|\left\langle f,\hat{\mu}\right\rangle -\left\langle T_{0}f,\hat{\mu}\right\rangle \right| & \leq & \left|\left\langle f,\hat{\mu}\right\rangle -\left\langle f,\mu_{j_{k}}\right\rangle \right|+\left|\left\langle f,\mu_{j_{k}}\right\rangle -\left\langle T_{0}f,\mu_{j_{k}}\right\rangle \right|\\
     &  & +\left|\left\langle T_{0}f,\mu_{j_{k}}\right\rangle -\left\langle T_{0}f,\hat{\mu}\right\rangle \right|.
    \end{alignat*}
    Since $Tf$ and $T_{0}f$ are continuous and $\left\{ \mu_{j_{k}}\right\} $
    convergence weakly to $\hat{\mu}$, the first and third terms on the
    right hand side approaches zero as $j$ goes to infinity. For the
    second term, we have 
    \begin{alignat*}{1}
    \left|\left\langle f,\mu_{j_{k}}\right\rangle -\left\langle T_{0}f,\mu_{j_{k}}\right\rangle \right| & =\left|\left\langle f,T_{j_{k}}^{*}\mu_{j_{k}}\right\rangle -\left\langle T_{0}f,\mu_{j_{k}}\right\rangle \right|\\
     & =\left|\left\langle T_{j_{k}}f,\mu_{j_{k}}\right\rangle -\left\langle T_{0}f,\mu_{j_{k}}\right\rangle \right|\\
     & \leq\left\Vert T_{j_{k}}f-T_{0}f\right\Vert 
    \end{alignat*}
\end{proof}
    
\begin{proof}[Proof of Theorem~\ref{thm1}] The topology of weak
    convergence can be defined by the metric on the probability measure
    space, 
    \begin{alignat*}{1}
    d\left(\mu,\nu\right) & =\sup_{f\in\mathcal{A}}\left\{ \left|\int f\left(s\right)\mu\left(ds\right)-\int f\left(s\right)\nu\left(ds\right)\right|\right\} ,
    \end{alignat*}
    where $\mathcal{A}$ is the space of Lipschitz functions on $S$,
    with constant $L\leq1$ and $-1\leq f\leq1$. Then, we have 
    \begin{alignat*}{1}
    \left\Vert Tf\left(s\right)-T_{j}f\left(s\right)\right\Vert  & =\left|\int f\left(\varphi\left(s,\varepsilon\right)\right)Q\left(d\varepsilon\right)-\int f\left(\varphi_{j}\left(s,\varepsilon\right)\right)Q\left(d\varepsilon\right)\right|\\
     & =\left|\int f\left(\varphi\left(s,\varepsilon\right)\right)-f\left(\varphi_{j}\left(s,\varepsilon\right)\right)Q\left(d\varepsilon\right)\right|.
    \end{alignat*}
    Since $f\in C^{\infty}$, there exists a constant $K$, such that
    \begin{alignat*}{1}
    \left|\int\left[f\left(\varphi\left(s,\varepsilon\right)\right)-f\left(\varphi_{j}\left(s,\varepsilon\right)\right)\right]Q\left(d\varepsilon\right)\right| & \leq Kd\left(\varphi,\varphi_{j}\right).
    \end{alignat*}
\end{proof}
    
\begin{proof}[Proof of Proposition~\ref{prop2}] Denote $s_{n}\left(s_{0}\right)$
    as 
    \[
    \underbrace{\varphi\left(\varphi\left(\varphi\cdots\left(\varphi\left(s_{0},\varepsilon_{1}\right),\varepsilon_{2}\right)\right)\right)}_{\textrm{n times}}.
    \]
    And denote $\hat{s}_{n}\left(s_{0}\right)$ as 
    \[
    \underbrace{\hat{\varphi}\left(\hat{\varphi}\left(\hat{\varphi}\cdots\left(\hat{\varphi}\left(s_{0},\varepsilon_{1}\right),\varepsilon_{2}\right)\right)\right)}_{\textrm{n times}}.
    \]
    Since $f$ is a Lipschitz function with constant $L$, we have 
    \begin{align*}
     & \left|\mathbb{E}\left[f\left(s_{n}\left(s_{0}\right)\right)\right]-\mathbb{E}\left[f\left(\hat{s}_{n}\left(s_{0}\right)\right)\right]\right|\\
    = & \left|\mathbb{E}\left[f\left(\varphi\left(s_{n-1}\left(s_{0}\right),\varepsilon_{n}\right)\right)\right]-\mathbb{E}\left[f\left(\hat{\varphi}\left(\hat{s}_{n-1}\left(s_{0}\right),\varepsilon_{n}\right)\right)\right]\right|\\
    \leq & \left|\mathbb{E}\left[f\left(\varphi\left(s_{n-1}\left(s_{0}\right),\varepsilon_{n}\right)\right)\right]-\mathbb{E}\left[f\left(\varphi\left(\hat{s}_{n-1}\left(s_{0}\right),\varepsilon_{n}\right)\right)\right]\right|\\
     & +\left|\mathbb{E}\left[f\left(\varphi\left(s_{n-1}\left(s_{0}\right),\varepsilon_{n}\right)\right)\right]-\mathbb{E}\left[f\left(\hat{\varphi}\left(\hat{s}_{n-1}\left(s_{0}\right),\varepsilon_{n}\right)\right)\right]\right|\\
    \leq & L\left(1-\varepsilon\right)\mathbb{E}\left\Vert s_{n-1}\left(s_{0}\right)-\hat{s}_{n-1}\left(s_{0}\right)\right\Vert +Ld\left(\hat{\varphi},\varphi\right).
    \end{align*}
    By the same argument above, we have 
    \begin{alignat*}{1}
     & L\left(1-\varepsilon\right)\mathbb{E}\left\Vert s_{n-1}\left(s_{0}\right)-\hat{s}_{n-1}\left(s_{0}\right)\right\Vert \\
    \leq & L\left(1-\varepsilon\right)^{2}\mathbb{E}\left\Vert s_{n-2}\left(s_{0}\right)-\hat{s}_{n-2}\left(s_{0}\right)\right\Vert +L\left(1-\varepsilon\right)^{2}d\left(\hat{\varphi},\varphi\right).
    \end{alignat*}
    Iterating this, we obtain 
    \[
    \left|\mathbb{E}\left[f\left(s_{n}\left(s_{0}\right)\right)\right]-\mathbb{E}\left[f\left(\hat{s}_{n}\left(s_{0}\right)\right)\right]\right|\leq\frac{Ld\left(\hat{\varphi},\varphi\right)}{\varepsilon}.
    \]
    Integrating this by an invariant measure $\hat{\mu}^{*}$ of $\hat{\varphi}$,
    we have 
    \[
    \left|\int\mathbb{E}\left[f\left(s_{n}\left(s_{0}\right)\right)\right]\hat{\mu}^{*}\left(ds_{0}\right)-\int\mathbb{E}\left[f\left(\hat{s}_{n}\left(s_{0}\right)\right)\right]\hat{\mu}^{*}\left(ds_{0}\right)\right|\leq\frac{Ld\left(\hat{\varphi},\varphi\right)}{\varepsilon}.
    \]
    Note that the second term on the left-hand side is equal to $\int f\left(s\right)\hat{\mu}^{*}\left(ds\right)$.
    From Theorem~\ref{thm:Yosida-and-Kakutani}, for every $s_{0}$,
    $\mathbb{E}\left[f\left(s_{n}\left(s_{0}\right)\right)\right]$ converges
    uniformly on $\beta\left(S\right)$ to $\int f\left(s\right)\mu^{*}\left(ds\right)$.
    Finally, we obtain 
    \[
    \left|\int f\left(s\right)\mu^{*}\left(ds\right)-\int f\left(s\right)\hat{\mu}^{*}\left(ds\right)\right|\leq\frac{Ld\left(\hat{\varphi},\varphi\right)}{\varepsilon}.
    \]
\end{proof}
    
\begin{thm}\label{thm:Change of Variable}(Change of variable formula).
    \textit{Let $g{:}\ \mathbb{R}^{n}\rightarrow\mathbb{R}^{m}$ be Lipschitz
    continuous, $n\leq m$. Then, for each measurable function $p{:}\ \mathbb{R}^{n}\rightarrow\mathbb{R}$,
    \begin{alignat*}{1}
    \int_{\mathbb{R}^{n}}p\left(x\right)Jg\left(x\right)m\left(dx\right) & =\int_{\mathbb{R}^{n}}\left[\sum_{x\in g^{-1}\left(y\right)}p\left(x\right)\right]m\left(dy\right).
    \end{alignat*}
    } 
\end{thm}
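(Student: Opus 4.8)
The plan is to prove this as the classical area formula of geometric measure theory (cf.\ Evans and Gariepy, Theorem 3.8), in which $Jg(x)=\sqrt{\det\bigl(Dg(x)^{\mathsf{T}}Dg(x)\bigr)}$, the measure $m$ on the left is Lebesgue measure $\mathcal{L}^n$, and the integral on the right is understood against $n$-dimensional Hausdorff measure $\mathcal{H}^n$ on $g(\mathbb{R}^n)\subseteq\mathbb{R}^m$. By a monotone-convergence argument it suffices to prove the identity for $p=\mathbf{1}_A$, the indicator of a Borel set $A\subseteq\mathbb{R}^n$; the general measurable $p$ then follows by approximation through simple functions and passage to the limit. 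The reduced goal is therefore
\[
\int_A Jg(x)\,dx \;=\; \int_{\mathbb{R}^m}\#\bigl(A\cap g^{-1}(y)\bigr)\,d\mathcal{H}^n(y).
\]

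First I would dispose of the degenerate part. By Rademacher's theorem $g$ is differentiable $\mathcal{L}^n$-a.e., so the critical set $Z=\{x:Jg(x)=0\}$ is well defined up to a null set, and the left integrand vanishes on $Z$. It remains to show $\mathcal{H}^n\bigl(g(Z)\bigr)=0$, so that the right side likewise receives no contribution from $Z$. This is an Eilenberg-type estimate: covering $Z$ by small cubes and using that $Dg$ is nearly rank-deficient there, the image of each cube is confined to a thin slab whose $\mathcal{H}^n$-content is bounded by $\epsilon$ times the cube's volume; summing and letting $\epsilon\downarrow0$ gives $\mathcal{H}^n(g(Z))=0$.

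The core of the argument is the linearization on the regular set $R=\{x:Jg(x)>0\}$. Fix $\epsilon>0$. Combining Rademacher differentiability with a Lusin/Egorov-type selection, I would decompose $R$ (up to an $\mathcal{L}^n$-null set) into countably many disjoint Borel pieces $\{E_k\}$ such that on each $E_k$ the map $g$ is bi-Lipschitz onto its image and is approximated by a fixed injective linear map $T_k:\mathbb{R}^n\to\mathbb{R}^m$ within relative distortion $1+\epsilon$. For an injective linear map $T$ the identity $\mathcal{H}^n(T(B))=JT\,\mathcal{L}^n(B)$, with $JT=\sqrt{\det(T^{\mathsf{T}}T)}$, holds exactly by the polar/singular-value decomposition; transferring this through the $1+\epsilon$ distortion to $g|_{E_k}$ yields
\[
(1+\epsilon)^{-n}\int_{A\cap E_k} Jg\,dx \;\le\; \int_{\mathbb{R}^m}\mathbf{1}_{g(A\cap E_k)}\,d\mathcal{H}^n \;\le\; (1+\epsilon)^{n}\int_{A\cap E_k} Jg\,dx.
\]
Summing over $k$ and using that the $E_k$ partition $R$, so that $\sum_k \mathbf{1}_{g(A\cap E_k)}(y)=\#(A\cap g^{-1}(y))$ (each preimage point is counted exactly once, which is precisely how the multiplicity sum $\sum_{x\in g^{-1}(y)}$ arises), gives the identity up to the factor $(1+\epsilon)^{\pm n}$; letting $\epsilon\downarrow0$ closes the estimate.

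I expect the main obstacle to be this linearization step: constructing the countable Borel partition $\{E_k\}$ on which $g$ is simultaneously bi-Lipschitz and within distortion $1+\epsilon$ of a single linear map. Making it rigorous requires combining Rademacher's theorem with a covering argument that copes with the pointwise variation of $Dg$, and checking that the uncovered exceptional set is $\mathcal{L}^n$-null so that it does not affect the integrals. The remaining ingredients---the exact linear case, the critical-set estimate, and the passage from indicators to general $p$---are routine by comparison.
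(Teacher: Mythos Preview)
Your outline is a faithful sketch of the classical area formula proof from geometric measure theory (Rademacher, critical-set estimate, linearization on a countable partition, then monotone convergence), and it is correct in substance. However, the paper does not actually prove this statement: it is recorded in the appendix as a quoted theorem---the standard area formula---and is then invoked as a tool in the proof of Proposition~\ref{prop3}. So there is no ``paper's own proof'' to compare against; the authors treat it as an imported result. Your write-up goes well beyond what the paper does, supplying the argument that the paper simply cites.
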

    
\begin{proof}[Proof of Proposition~\ref{prop3}] Let 
    \begin{alignat*}{1}
    {\displaystyle \prod_{t=1}^{T}p_{j}(y_{t}\mid y^{t-1};\theta)} & =\int\left(\int\prod_{t=1}^{T}p(\varepsilon_{1}^{t};\theta)p_{j}(y_{t}\mid\varepsilon_{1}^{t},s_{0},y^{t-1};\theta)d\varepsilon_{1}^{t}\right)\mu_{j}^{*}(ds_{0};\theta),
    \end{alignat*}
    \begin{alignat*}{1}
    {\displaystyle \prod_{t=1}^{T}\tilde{p}_{j}(y_{t}\mid y^{t-1};\theta)} & =\int\left(\int\prod_{t=1}^{T}p(\varepsilon_{1}^{t};\theta)p(y_{t}\mid\varepsilon_{1}^{t},S_{0},y^{t-1};\theta)d\varepsilon_{1}^{t}\right)\mu_{j}^{*}(ds_{0};\theta),
    \end{alignat*}
    \begin{alignat*}{1}
    {\displaystyle \prod_{t=1}^{T}p(y_{t}\mid y^{t-1};\theta)} & =\int\left(\int\prod_{t=1}^{T}p(\varepsilon_{1}^{t};\theta)p(y_{t}\mid\varepsilon_{1}^{t},s_{0},y^{t-1};\theta)d\varepsilon_{1}^{t}\right)\mu^{*}(ds_{0};\theta).
    \end{alignat*}
    Consider the convergence of 
    \begin{alignat*}{1}
    p_{j}\left(y_{t}\mid\varepsilon_{1}^{t},s_{0},y^{t-1};\theta\right) & =p(\eta_{j,t};\theta)p(\varepsilon_{j,2,t};\theta)\left|dy_{j}(\eta_{j,t},\varepsilon_{j,2,t};\theta)\right|,
    \end{alignat*}
    where 
    \[
    \mid dy_{j}(\mathit{v}_{t},\mathit{w}_{2,t};\gamma)\mid=det\left[\begin{array}{cc}
    \nabla\frac{\partial g_{j}}{\partial v_{t}} & \nabla\frac{\partial g_{j}}{\partial w_{2,t}}\end{array}\right].
    \]
    Since the change of variable formula eq.~(\ref{thm:Change of Variable})
    holds for $Jg_{j}\left(\varepsilon,\eta;\theta\right)=\left|dy_{j}(\mathit{v}_{j,t},\mathit{w}_{j,2,t};\theta)\right|$,
    we have 
    \[
    \int p_{j}\left(y_{t}\mid W_{1}^{t},S_{0},y^{t-1};\theta\right)Q\left(d\varepsilon,d\eta\right)=\int p\left(\varepsilon;\theta\right)p\left(\eta;\theta\right)Jg_{j}\left(\varepsilon,\eta;\theta\right)Q\left(d\varepsilon,d\eta\right).
    \]
    By $d_{C^{1}}\left(\varphi_{j},\varphi\right)\rightarrow0$ and $d_{C^{1}}\left(g_{j},g\right)\rightarrow0$,
    we have 
    \[
    \int p\left(\varepsilon;\theta\right)p\left(\eta;\theta\right)Jg_{j}\left(\varepsilon,\eta;\theta\right)Q\left(d\varepsilon,d\eta\right)\rightarrow\int p\left(\varepsilon;\theta\right)p\left(\eta;\theta\right)Jg\left(\varepsilon,\eta;\theta\right)Q\left(d\varepsilon,d\eta\right).
    \]
    Then, we have 
    \[
    \prod_{t=1}^{T}p_{j}(y_{t}\mid y^{t-1};\theta)\rightarrow{\displaystyle \prod_{t=1}^{T}\tilde{p}_{j}(y_{t}\mid y^{t-1};\theta)}.
    \]
    
    By Lemma~1 of \citet{Villaverde-Ramirez-Santos_06}, $\left(\int\prod_{t=1}^{T}p(W_{1}^{t};\theta)p(y_{t}\mid W_{1}^{t},S_{0},y^{t-1};\theta)dW_{1}^{t}\right)$
    is a continuous function. Further, given Proposition~1 of \citet{Villaverde-Ramirez-Santos_06},
    we have $\mu_{j}^{*}(dS_{0};\theta)\rightarrow\mu^{*}(dS_{0};\theta)$.
    Therefore, 
    \begin{alignat*}{1}
    {\displaystyle \prod_{t=1}^{T}\tilde{p}_{j}(y_{t}\mid y^{t-1};\theta)} & \rightarrow{\displaystyle \prod_{t=1}^{T}p(y_{t}\mid y^{t-1};\theta)}.
    \end{alignat*}
\end{proof}


\end{document}